\journal{Applied Mathematics Letters}
\newcommand{\be}{\begin{equation}}
\newcommand{\en}{\end{equation}}
\def\R{\mathbb R}
\def\N{\mathbb N}
\def\d{\hspace{0.15em} \textup{d}}
\def\B{\mathbf{B}}
\def\v{\mathbf{v}}
\def\A{\mathbf{A}}
\def\d{\,\textup{d}}
\def\tp{\textup}
\def\n{\mathbf{n}}
\def\t{\mathbf{t}}
\def\v{\mathbf{v}}
\def\u{\mathbf{u}}
\def\w{\mathbf{w}}
\def\0{\mathbf{0}}
\def\K{\mathcal{K}}
\def\Q{\mathcal{Q}}
\def\X{\boldsymbol{\mathcal{X}}}
\def\brho{\boldsymbol\rho}
\def\pplus{\stackrel{\perp}{\oplus}}
\newcommand{\defeq}{\mathrel{:\mkern-0.25mu=}}
\newtheorem{theorem}{Theorem}
\theoremstyle{definition}
\theoremstyle{remark}
\newtheorem{remark}[]{Remark}
\def\curl{{\rm curl}}
\begin{document}

\begin{frontmatter}

\title{On the proof of Taylor's conjecture\\ in multiply connected domains}

\author{Daniel Faraco}
\address{Departamento de Matem\'aticas, Univeridad Aut\'{o}noma de Madrid, E-28049 Madrid, Spain; ICMAT CSIC-UAM-UC3M-UCM, E-28049 Madrid, Spain}

\author{Sauli Lindberg}
\address{Department of Mathematics and Statistics, University of Helsinki, P.O. Box 68, \\00014 Helsingin yliopisto, Finland}

\author{David MacTaggart}
\address{School of Mathematics and Statistics, University of Glasgow, \\Glasgow, G12 8QQ, United Kingdom}

\author{Alberto Valli}
\address{Department of Mathematics, University of Trento, 38123 Povo (Trento), Italy}

\begin{abstract}
In this Letter we extend the proof, by Faraco and Lindberg~\cite{FL20}, of Taylor's conjecture in multiply connected domains to cover arbitrary vector potentials and remove the need to impose conditions on the magnetic field due to gauge invariance.  This extension allows us to treat general magnetic fields in closed domains that are important in laboratory plasmas and brings closure to a problem that has been open for almost 50 years. 
\end{abstract}

\begin{keyword}
Magnetohydrodynamics, Magnetic Helicity, Magnetic Relaxation, Turbulence
\end{keyword}

\end{frontmatter}


\section{Introduction}

The Woltjer--Taylor theory of magnetic relaxation is remarkable in the sense that it is one of the few examples of turbulence theory for which the end result can be written down explicitly. In short, the theory predicts that for a magnetically-closed plasma, a magnetic field will evolve through turbulent relaxation to the equilibrium of a linear force-free field. This relaxation is based on two key results: Woltjer's theorem \cite{W58} and Taylor's conjecture \cite{T74,T86}. Woltjer's theorem states that in a fixed and magnetically-closed domain, the magnetic field that minimizes the magnetic energy whilst keeping the global magnetic helicity fixed is a linear force-free field. Taylor combined this theorem with the conjecture that in a non-ideal plasma (one with small but non-zero resistivity) the global helicity of the entire domain is approximately conserved. Related to this, Taylor also assumed that the magnetic helicities of subdomains (sub-helicities) are not important in determining the final relaxed state since the magnetic field undergoes significant field line reconnection during turbulent relaxation, thus destroying the sub-helicities.

Despite some good experimental evidence for Woltjer--Taylor relaxation, see \cite{BPV76,T86,JPS95}, there also exists research which suggests that the assumption that only the global magnetic helicity is conserved is too restrictive in general. Simulations, such as \cite{YRH15}, show that magnetic fields do not always relax to linear force-free fields. This deviation from the classic Woltjer--Taylor theory has been attributed to the influence of other topological invariants besides the global magnetic helicity \cite{YH11}. It has also been suggested that field line helicities (sub-helicities of individual field lines) can play an important role in magnetic relaxation, and that they are redistributed by reconnection rather than destroyed. See \cite{Y20} for a summary of these issues. 

Although modifications to the classic Woltjer--Taylor theory seem to be needed for a more comprehensive theory of magnetic relaxation, one aspect of the theory that has proved to be robust is Taylor's conjecture itself. Berger \cite{B84} showed that strict limits on magnetic helicity dissipation can be found, which depend on the plasma energy, the energy dissipation rate and the magnetic diffusion. In both astrophysical and laboratory plasmas, the conditions are often such that magnetic helicity dissipation, following Berger's bounds, is very small and so magnetic helicity can be treated as an invariant in non-ideal (resistive) plasmas as well as ideal plasmas. For a list of other relevant theoretical works on helicity conservation, we direct the reader to \cite[][Sect. 1]{FL20}.

Recently, Faraco and Lindberg~\cite{FL20} provided a rigorous proof of Taylor's conjecture in a magnetically-closed turbulent plasma. They described the turbulent plasma using Onsager's theory of turbulence \cite{O49,ES06} and proved the conjecture by considering limits of Leray--Hopf solutions of non-ideal magnetohydrodynamics (MHD). The proof  considers both simply connected and multiply connected domains. In the former, the choice of vector potential in the magnetic helicity is arbitrary, but in the latter this is not the case. The fact that only certain vector potentials are suitable in multiply connected domains stems from~\cite{FL20} considering the classical form of magnetic helicity,
\be\label{hel_class}
H = \int_{\Omega}\A\cdot\B,
\en 
where $\B$ is the magnetic field, $\A$ is the vector potential with $\tp{\curl}\,\A=\B$ and $\Omega$ is the magnetically-closed domain. Equation (\ref{hel_class}) represents the gauge invariant form of helicity in simply connected domains. MacTaggart and Valli~\cite{MV19} showed that the definition of magnetic helicity $H$ needs to be extended for multiply connected domains in order to accommodate arbitrary vector potentials that do not impose any restrictions on the magnetic field. For example, it can be shown that a consequence of imposing gauge invariance with equation (\ref{hel_class}) in multiply connected domains is that the magnetic flux through internal cuts of the domain must be zero, e.g., the toroidal magnetic flux in a tokamak would be zero.

In this Letter, we bring together the results of Faraco and Lindberg~\cite{FL20}  and MacTaggart and Valli~\cite{MV19} to provide a complete proof of Taylor's conjecture for arbitray vector potentials of closed magnetic fields with no restrictions on the magnetic field imposed by gauge invariance. We proceed by only describing extensions and modifications to specific parts of the proof in~\cite{FL20} that are relevant for expanding its scope in the manner described above. For full details of the many technical results involved in the original proof, the reader is directed to the work by Faraco and Lindberg~\cite{FL20}. Before presenting the main result, we now introduce some preliminary material that will be key to understanding the extension to their proof.

\section{Preliminaries}
\subsection{Geometric set-up}
We consider a multiply connected domain $\Omega\subset\mathbb{R}^3$ that is a bounded open connected set with Lipschitz continuous boundary $\partial\Omega$ and unit outer normal $\n$. Let $g>0$ be the first Betti number, or genus, of $\Omega$. Then the first Betti number of $\partial\Omega$ is 2$g$. We can consider 2$g$ non-bounding cycles on $\partial\Omega$, $\{\gamma_i\}_{i=1}^g\cup\{\gamma'_i\}_{i=1}^g$, that represent the generators of the first homology group of $\partial\Omega$. The $\{\gamma_i\}_{i=1}^g$ represent the generators of the first homology group of $\overline{\Omega}$. The tangent vector on $\gamma_i$ is denoted $\t_i$. Similarly, the $\{\gamma'_i\}_{i=1}^g$ represent the generators of the first homology group of $\overline{\Omega'}$, where $\Omega'=B\setminus\overline{\Omega}$ and $B$ is an open ball containing $\overline{\Omega}$. The tangent vector on $\gamma'_i$ is denoted $\t'_i$.

In $\Omega$ there exist $g$ `cutting surfaces' $\{\Sigma_i\}_{i=1}^g$ that are connected orientable Lipschitz surfaces satisfying $\Sigma_i\subset\Omega$. Each cutting surface in $\Omega$ satisfies $\partial\Sigma_i=\gamma'_i$ and $\overline{\Sigma}_i$ `cuts' the corresponding cycle $\gamma_i$ in exactly one point. Each cutting surface $\Sigma_i$ has a unit normal vector $\n_{\Sigma_i}$ oriented as $\gamma_i$. An equivalent definition applies for the cutting surfaces $\{\Sigma'_i\}_{i=1}^g$ of $\Omega'$. In particular, $\partial \Sigma'_i = \gamma_i$. 

Two examples of the geometric set-up are displayed in Figure \ref{geo_fig1} for a 2-fold torus and in Figure \ref{geo_fig2} for a toroidal shell.

\begin{figure}[h!]
\centering
{\begin{tikzpicture}[scale=0.6]
\draw[thick] (-6,0)--(0,0);
\draw[thick] (-6,1)--(0,1);
\draw[thick] (-6,0)--(-6,1);
\draw[thick] (0,0)--(0,1);
\draw[thick] (0,0)--(5,3);
\draw[thick] (0,1)--(5,4);
\draw[thick] (5,3)--(5,4);
\draw[thick] (-6,1)--(-1,4);
\draw[thick] (5,4)--(-1,4);
\draw[thick] (5,4)--(-1,4);

\draw[thick,blue] (-4,1.5)--(-2.25,1.5);
\draw[thick,blue,->] (-0.5,1.5)--(-2.25,1.5);
\draw[thick,blue] (-4,1.5)--(-4+0.8*5/3,1.5+0.8*1); 
\draw[thick,blue] (-0.5,1.5)--(-0.5+0.8*5/3,1.5+0.8*1); 
\draw[thick,blue] (-4+0.8*5/3,1.5+0.8*1)--(-0.5+0.8*5/3,1.5+0.8*1);
\node[blue] at (-3.65,2.4) [anchor=north]{\small{$\gamma_1$}};

\draw[thick] (-4+0.8*5/3,1.5+0.8*1)--(-4+0.8*5/3,1.5);

\fill[orange,fill opacity=0.1]  (-4,1.5)--(-0.5,1.5)--(-0.5+0.8*5/3,1.5+0.8*1)--(-4+0.8*5/3,1.5+0.8*1);
\node at (-1.5,2.3) [anchor=north]{\small{$\Sigma'_1$}};

\draw[thick,blue] (-4+1.3*5/3,1.5+1.3*1)--(-2.25+1.3*5/3,1.5+1.3*1);
\draw[thick,blue] (-4+2.1*5/3,1.5+2.1*1)--(-0.5+2.1*5/3,1.5+2.1*1);
\draw[thick,blue] (-4+1.3*5/3,1.5+1.3*1)--(-4+2.1*5/3,1.5+2.1*1); 
\draw[thick,blue] (-0.5+1.3*5/3,1.5+1.3*1)--(-0.5+2.1*5/3,1.5+2.1*1); 
\draw[thick,blue,->] (-0.5+1.3*5/3,1.5+1.3*1)--(-2.25+1.3*5/3,1.5+1.3*1);
\node[blue] at (-1.61,3.65) [anchor=north]{\small{$\gamma_2$}};

\draw[thick] (-4+2.1*5/3,1.5+2.1*1)--(-4+2.1*5/3,1.5+1.3*1);

\fill[orange,fill opacity=0.1]  (-4+1.3*5/3,1.5+1.3*1)--(-0.5+1.3*5/3,1.5+1.3*1)--(-0.5+2.1*5/3,1.5+2.1*1)--(-4+2.1*5/3,1.5+2.1*1);
\node at (0.6,3.7) [anchor=north]{\small{$\Sigma'_2$}};


\draw[thick,red] (-0.5+0.4*5/3,1.5+0.4*1)--(0.15+0.4*5/3,1.5+0.4*1);
\draw[thick,red,->] (0.8+0.4*5/3,1.5+0.4*1)--(0.15+0.4*5/3,1.5+0.4*1);
\draw[thick,red] (0.8+0.4*5/3,1.5+0.4*1-1)--(0.8+0.4*5/3,1.5+0.4*1);
\draw[thick,red,dashed]  (-0.5+0.4*5/3,1.5+0.4*1)-- (-0.5+0.4*5/3,1.5+0.4*1-1);
\draw[thick,red,dashed]  (-0.5+0.4*5/3,1.5+0.4*1-1)--(0.8+0.4*5/3,1.5+0.4*1-1);

\fill[cyan,fill opacity=0.1]  (-0.5+0.4*5/3,1.5+0.4*1)--(0.8+0.4*5/3,1.5+0.4*1)--(0.8+0.4*5/3,1.5+0.4*1-1)--(-0.5+0.4*5/3,1.5+0.4*1-1);
\node at (1,1.65) [anchor=north]{\small{$\Sigma_1$}};
\node[red] at (1.8,1.95) [anchor=north]{\small{$\gamma'_1$}};

\draw[thick,red] (-0.5+1.65*5/3,1.5+1.65*1)--(0.155+1.65*5/3,1.5+1.65*1);
\draw[thick,red,->] (0.81+1.65*5/3,1.5+1.65*1)--(0.155+1.65*5/3,1.5+1.65*1);
\draw[thick,red,]  (0.81+1.65*5/3,1.5+1.65*1-1)--(0.81+1.65*5/3,1.5+1.65*1);
\draw[thick,red,dashed]  (-0.5+1.65*5/3,1.5+1.65*1)--(-0.5+1.65*5/3,1.5+1.65*1-1);
\draw[thick,red,dashed] (-0.5+1.65*5/3,1.5+1.65*1-1)-- (0.81+1.65*5/3,1.5+1.65*1-1);

\fill[cyan,fill opacity=0.1] (-0.5+1.65*5/3,1.5+1.65*1-1)-- (0.81+1.65*5/3,1.5+1.65*1-1)--(0.81+1.65*5/3,1.5+1.65*1)--(-0.5+1.65*5/3,1.5+1.65*1);
\node at (3.1,2.85) [anchor=north]{\small{$\Sigma_2$}};
\node[red] at (3.85,3.15) [anchor=north]{\small{$\gamma'_2$}};

\end{tikzpicture}}
\caption{An illustration of cutting surfaces (shaded areas) and cycles (coloured paths) for a 2-fold torus. The notation is as described in the main text.}\label{geo_fig1}
\end{figure}
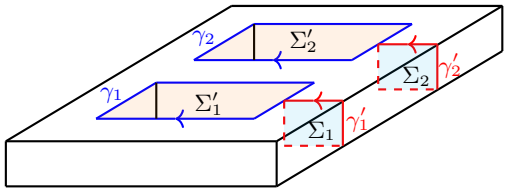

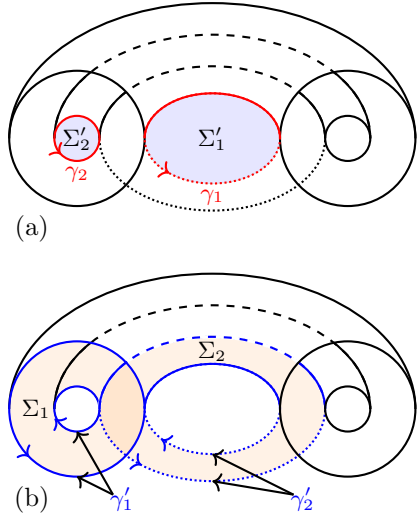
\begin{figure}[h!]
\centering
{\begin{tikzpicture}[scale=0.6]


\draw [fill=blue, opacity=0.1] (0,0) ellipse (1.5cm and 1cm);
\draw [thick] (3,0) ellipse (1.5cm and 1.5cm);
\draw [thick] (-3,0) ellipse (1.5cm and 1.5cm);
\draw [thick] (3,0) ellipse (0.5cm and 0.5cm);
\draw [fill=blue, opacity=0.1] (-3,0) ellipse (0.5cm and 0.5cm);

\draw[black, thick]  (-3.5,0) arc(-180:-220:3.5cm and 2.3cm);
\draw[black, thick]  (3.5,0) arc(-360:-320:3.5cm and 2.3cm);
\draw[black, thick,dashed]  (-3.5,0) arc(-180:-360:3.5cm and 2.3cm);

\draw[black, thick,dashed]  (-2.5,0) arc(-180:-360:2.5cm and 1.6cm);
\draw[black, thick]  (-2.5,0) arc(-180:-220:2.5cm and 1.6cm);
\draw[black, thick]  (2.5,0) arc(-360:-320:2.5cm and 1.6cm);

\draw[black, thick]  (-4.5,0) arc(-180:-360:4.5cm and 3cm);
\draw[red, thick]  (-1.5,0) arc(-180:-360:1.5cm and 1.cm);

\draw[black, thick,densely dotted]  (-2.5,0) arc(-180:0:2.5cm and 1.6cm);

 \draw[
        red,thick,decoration={markings, mark=at position 0.625 with {\arrow{>}}},
        postaction={decorate}
        ]
        (-3,0) ellipse (0.5cm and 0.5cm);

 \draw[
        red,thick,densely dotted,decoration={markings, mark=at position 0.625 with {\arrow{>}}},
        postaction={decorate}
        ]
         (0,0) ellipse (1.5cm and 1cm);

\node at (-3,0) {\small{$\Sigma'_2$}};
\node at (0,0) {\small{$\Sigma'_1$}};
\node[red] at (0,-1.3) {\small{$\gamma_1$}};
\node[red] at (-3,-0.8) {\small{$\gamma_2$}};


\draw [fill=orange, opacity=0.1] (-3,-6) ellipse (1.5cm and 1.5cm);

\draw [fill=white] (-3,-6) ellipse (0.5cm and 0.5cm);

\draw[black, thick]  (-3.5,-6) arc(-180:-220:3.5cm and 2.3cm);
\draw[black, thick]  (3.5,-6) arc(-360:-320:3.5cm and 2.3cm);
\draw[black, thick,dashed]  (-3.5,-6) arc(-180:-360:3.5cm and 2.3cm);

\draw[black, thick,dashed]  (-2.5,-6) arc(-180:-360:2.5cm and 1.6cm);
\draw[black, thick]  (-2.5,-6) arc(-180:-220:2.5cm and 1.6cm);
\draw[black, thick]  (2.5,-6) arc(-360:-320:2.5cm and 1.6cm);

\draw[black, thick]  (-4.5,-6) arc(-180:-360:4.5cm and 3cm);

\draw[draw=none,fill=orange,opacity=0.1]  (-2.5,-6) arc(-180:180:2.5cm and 1.6cm);
\draw [fill=white,draw=none] (0,-6) ellipse (1.5cm and 1cm);

 \draw[
        blue,thick,decoration={markings, mark=at position 0.625 with {\arrow{<}}},
        postaction={decorate}
        ]
        (-3,-6) ellipse (0.5cm and 0.5cm);
        
 \draw[
        blue,thick,densely dotted,decoration={markings, mark=at position 0.625 with {\arrow{<}}},
        postaction={decorate}
        ]
        (0,-6) ellipse (1.5cm and 1cm);

\draw[
        blue,thick,densely dotted,decoration={markings, mark=at position 0.625 with {\arrow{>}}},
        postaction={decorate}
        ]
        (0,-6) ellipse (2.5cm and 1.6cm);

\draw[white, very thick]  (-2.5,-6) arc(-180:-360:2.5cm and 1.6cm);
\draw[blue, thick,dashed]  (-2.5,-6) arc(-180:-360:2.5cm and 1.6cm);
\draw[blue, thick]  (-2.5,-6) arc(-180:-221:2.5cm and 1.6cm);
\draw[blue, thick]  (2.5,-6) arc(-360:-320:2.5cm and 1.6cm);
\draw[blue, thick]  (-1.5,-6) arc(-180:-360:1.5cm and 1cm);

\draw[
        blue,thick,decoration={markings, mark=at position 0.625 with {\arrow{>}}},
        postaction={decorate}
        ]
        (-3,-6) ellipse (1.5cm and 1.5cm);
\draw [thick] (3,-6) ellipse (1.5cm and 1.5cm);
\draw [thick] (3,-6) ellipse (0.5cm and 0.5cm);

\node at (0,-4.725) {\small{$\Sigma_2$}};
\node at (-3.9,-6) {\small{$\Sigma_1$}};

\node[blue] at (-2,-8) {\small{$\gamma'_1$}};
\draw[thick,->] (-2.2,-7.9) -- (-3,-7.5);
\draw[thick,->] (-2.2,-7.9) -- (-3,-6.5);

\node[blue] at (2,-8) {\small{$\gamma'_2$}};
\draw[thick,->] (1.8,-7.9) -- (0,-7.6);
\draw[thick,->] (1.8,-7.9) -- (0,-7);

\node at (-4,-2) {(a)};
\node at (-4,-8) {(b)};

\end{tikzpicture}}
\caption{An illustration of cutting surfaces (shaded areas) and cycles (coloured paths) for a toroidal shell (cut in half). (a) shows the cutting surfaces of $\Omega'$ and their associated cycles. (b) shows the same but for $\Omega$.}\label{geo_fig2}
\end{figure}

\subsection{Required function spaces}
Using the basic Hilbert spaces listed in Alonso Rodr{\'\i}guez et al.~\cite[][Sect. 2]{ACRVV18}, the space of harmonic Neumann fields is defined by
\[\K_T \defeq H(\tp{curl}^0;\Omega) \cap H_0(\tp{div}^0;\Omega) .
\]
Remember that if $\v \in H_0(\tp{div}^0;\Omega)$ satisfies $\v \perp \K_T$ (where orthogonality has to be intended in $(L^2(\Omega))^3$) then it follows that $\int_{\Sigma_i} \v \cdot \n_{\Sigma_i} \d S = 0$ for $i = 1,\ldots,g$, and viceversa. Thus, if we denote
\[\Q \defeq \nabla H^1(\Omega) \pplus \K_T,\]
we have
\begin{align*}
& \Q = H(\tp{curl}^0;\Omega), \\
& \Q^\perp = \left\{ \v \in H_0(\tp{div}^0;\Omega) \colon \int_{\Sigma_i} \v \cdot \n_{\Sigma_i} = 0 \tp{ for } i = 1,\ldots,g \right\}, \\
& H_0(\tp{div}^0;\Omega) = \Q^\perp \pplus \K_T.
\end{align*}
Analogous spaces can be defined for $\Omega'$, e.g., $\K_T' \defeq H(\tp{curl}^0;\Omega') \cap H_0(\tp{div}^0;\Omega')$. For the ease of the reader, let us also recall that the notation in Faraco and Lindberg~\cite{FL20} is different: precisely, $L^2_\sigma(\Omega;\R^3) = H_0(\tp{div}^0;\Omega)$, $L^2_H(\Omega;\R^3) = \K_T$, $L^2_\Sigma(\Omega;\R^3) = \Q^\perp$.

\subsection{Integral identities}

Let the basis functions of $\K_T$ and $\K_T'$ be denoted by $\brho_i$ and $\brho'_i$, $i=1\dots g$, respectively. These functions are defined by imposing that $\oint_{\gamma_k}\brho_i\cdot\t_k = \delta_{ki}$ and $\oint_{\gamma'_k}\brho'_i\cdot\t'_k = \delta_{ki}$.

Let us recall that, given $\v\in H_0(\tp{div}^0;\Omega)$, it holds that
\be\label{fluxSi}
\int_{\Omega}\v\cdot\brho_i = \int_{\Sigma_i}\v\cdot\n_{\Sigma_i}. 
\en 

Given $\w\in \X \defeq \{\w\in H(\tp{curl};\Omega): \curl\,\w\cdot\n = 0\,\tp{on}\,\partial\Omega\}$, as in Alonso Rodr{\'\i}guez et al.~\cite{ACRVV18} we define
\[
\oint_{\gamma_i}\w\cdot\t_i \defeq -\int_{\partial\Omega}\n\times\w\cdot\brho'_i, \quad \oint_{\gamma'_i}\w\cdot\t'_i \defeq \int_{\partial\Omega}\n\times\w\cdot\brho_i
\]
(as in~\cite{ACRVV18} the generators of the first homology group of $\overline{\Omega}$ are denoted $\gamma'_i$ and those of the first homology group of $\overline{\Omega'}$ are denoted $\gamma_i$). Note that, with these definitions and taking into account \eqref{fluxSi}, Stokes' theorem holds on $\Sigma_i$, i.e., $\oint_{\gamma'_i}\w\cdot\t'_i = \int_{\Sigma_i} \curl\,\w\cdot\n_{\Sigma_i}$.

As in Lemma 7 of~\cite{ACRVV18}, for $\w$, $\v\in\X$ we find
\be\label{iden}
\begin{array}{ll}
\int_{\partial\Omega}(\w\times\n)\cdot\v \!\!\!&= \sum_{i=1}^g\left(\oint_{\gamma_i}\w\cdot\t_i\right)\left(\oint_{\gamma'_i}\v\cdot\t'_i\right) \\ &\qquad - \sum_{i=1}^g\left(\oint_{\gamma'_i}\w\cdot\t'_i\right)\left(\oint_{\gamma_i}\v\cdot\t_i\right).
\end{array}
\en
For more details on the derivation of this result, the reader is directed to~\cite{ACRVV18}.

\subsection{Leray--Hopf solutions of viscous, resistive MHD}
In ~\cite{FL20}, the domains studied have a $C^{1,1}$ regular boundary. This smoothness condition implies, in particular, that $X_T(\Omega) \defeq H(\tp{curl};\Omega) \cap H_0(\tp{div};\Omega) \hookrightarrow H^1(\Omega) \hookrightarrow L^6(\Omega)$, and this chain of embeddings allows one to prove the existence of Leray--Hopf solutions for the velocity $\u \in L^\infty(0,T;H_0(\tp{div}^0(\Omega))) \cap L^2(0,T; H_0^1(\Omega))$ and magnetic field $\B \in L^\infty(0,T;H_0(\tp{div}^0(\Omega))) \cap L^2(0,T; H^1(\Omega))$ by standard arguments (see~\cite[Appendix A]{FL20}).

In fact, Lipschitz regularity of $\partial \Omega$ is also a strong enough assumption for the existence of Leray--Hopf solutions, as long as the definition of Leray--Hopf solutions is modified in the manner described below. First recall that in bounded Lipschitz domains, the weaker embedding result $X_T(\Omega) \hookrightarrow H^{1/2}(\Omega) \hookrightarrow L^3(\Omega)$ holds. We relax the smoothness requirements on the magnetic field to $\B \in L^\infty(0,T;H_0(\tp{div}^0(\Omega))) \cap L^2(0,T; X_T(\Omega))$. Whereas in $C^{1,1}$ regular domains the weak formulation of the Cauchy momentum equation says that, for viscosity $\nu$, $\langle \partial_t \u, \varphi \rangle + \int_\Omega (\u \cdot \nabla \u - \B \cdot \nabla \B) \cdot \varphi + \nu \int_\Omega \nabla \u : \nabla \varphi = 0$ holds at almost each $t \in [0,T)$ for every $\varphi \in H(\tp{div}^0(\Omega)) \cap H^1_0(\Omega)$, in bounded Lipschitz domains the formula $\langle \partial_t \u, \varphi \rangle + \int_\Omega (\u \cdot \nabla \u - \tp{curl} \, \B \times \B) \cdot \varphi + \nu \int_\Omega \nabla \u : \nabla \varphi = 0$ needs to be used instead; note that $\|\tp{curl} \, \B \times \B \cdot \varphi\|_{L^1} \le \|\tp{curl} \, \B\|_{L^2} \|\B\|_{L^3} \|\varphi\|_{L^6}$ at almost each $t \in [0,T)$ by H\"{o}lder's inequality. Under these modifications, the main result~\cite[Theorem 1.5]{FL20} holds and the proof only requires minor adjustments.

\section{Main result}
Suppose that $\B,\u \in L^\infty(0,T;H_0(\tp{div}^0;\Omega))$ form a weak solution of ideal MHD with initial data $\B_0,\u_0 \in H_0(\tp{div}^0;\Omega)$, where $\u$ is the velocity. Assume, furthermore, that $(\B,\u)$ arises as an ideal (inviscid, non-resistive) limit of Leray--Hopf solutions. Using the notation from~\cite{FL20} we denote
\be\label{decomp_b}
\B = \B^\Sigma + \B^H, \qquad \B^\Sigma \in \Q^\perp \tp{ and } \B^H \in \K_T.
\en
Let us recall that  at almost each time $\B^H$ is the unique element of $\K_T$ such that $\int_{\Sigma_i} \B^H \cdot \n_{\Sigma_i} = \int_{\Sigma_i} \B \cdot \n_{\Sigma_i}$ for each $i=1,\ldots,g$.
Further, we decompose the vector potential as
\be\label{decomp_a}
\A = \A^\Sigma + \A^H,
\en
where at almost each time $\A^H$ is the unique element of $\Q^\perp$ such that $\tp{curl} \, \A^H = \B^H$ (for these existence results see, e.g.,~\cite[Sect.\ 2.2,  and in particular Remark 1]{ABV19}). Similar notation is used for $\B_0$ and $\A_0$.

The generalized  expression for magnetic helicity in multiply connected domains is defined as
\be\label{hel_gen}
\Upsilon(\B) \defeq \int_{\Omega}\A\cdot\B - \sum_{i=1}^g\oint_{\gamma_i}\A\cdot\t_i\int_{\Sigma_i}\B\cdot\n_{\Sigma_i}
\en
(see MacTaggart and Valli~\cite{MV19}). It is important to note that this definition is gauge invariant, i.e., it is independent of the vector potential $\A \in \mathcal{X}$. 

Using this gauge invariant expression for magnetic helicity we are now in a position to write an adapted version of Theorem 1.5 of Faraco and Lindberg~\cite{FL20}, which describes the convergence of magnetic helicity; clearly, no dependence on the vector potential appears.

\begin{theorem}
In a domain $\Omega\subset\mathbb{R}^3$ with the properties described previously, suppose that $(\u,\B)$ is a weak ideal limit of Leray--Hopf solutions $(\u_k,\B_k)$, $k\in\mathbb{N}$. The initial data of $\B$ and $\B_k$ are denoted $\B_0$ and $\B_{k,0}$, respectively. Then it holds that
\be\label{diss}
\Upsilon(\B_k,t)-\Upsilon(\B_{k,0}) = -2\eta_k\int_0^t\int_{\Omega} \B_k\cdot\curl\,\B_k,
\en 
where $\eta_k > 0$ is the resistivity, for all $k\in\mathbb{N}$ and $t\in[0,T)$. Furthermore
\be\label{hel_con}
\Upsilon(\B,t)-\Upsilon(\B_0) = 0,
\en 
for almost each $t\in[0,T)$.
\end{theorem}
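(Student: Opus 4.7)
The plan is to adapt Theorem~1.5 of~\cite{FL20} while tracking the additional term $-\sum_i \oint_{\gamma_i}\A\cdot\t_i\int_{\Sigma_i}\B\cdot\n_{\Sigma_i}$ that distinguishes $\Upsilon$ from the classical helicity $\int_\Omega\A\cdot\B$. Since $\Upsilon$ is gauge invariant, for the Leray--Hopf flow I would fix a convenient time-dependent gauge by uncurling Faraday's law, $\partial_t\A_k = \u_k\times\B_k - \eta_k\curl\B_k + \nabla\phi_k$, with $\A_k(t)\in\X$. The proof then naturally splits into (a) establishing \eqref{diss} for each $k$, and (b) passing to the ideal limit to obtain \eqref{hel_con}.

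For step (a), I would compute $d\Upsilon(\B_k)/dt$ term by term. Integration by parts in $\frac{d}{dt}\int_\Omega\A_k\cdot\curl\A_k$ yields the bulk contribution $-2\eta_k\int_\Omega\B_k\cdot\curl\B_k$ (using $\B_k\cdot(\u_k\times\B_k)=0$ and that $\nabla\phi_k$ pairs trivially against $\B_k\in H_0(\tp{div}^0;\Omega)$), plus a boundary remainder $\int_{\partial\Omega}(\partial_t\A_k\times\A_k)\cdot\n$. Because $\u_k$ and $\B_k$ are tangent to $\partial\Omega$, the vector $\u_k\times\B_k$ is normal to $\partial\Omega$ and contributes nothing; what survives is $-\eta_k\int_{\partial\Omega}(\A_k\times\n)\cdot\curl\B_k$, which by \eqref{iden} (applied with $\w=\A_k$, $\v=\curl\B_k$) is a bilinear expression in the circulations $\oint_{\gamma_i}\A_k\cdot\t_i$, $\oint_{\gamma'_i}\A_k\cdot\t'_i$ and the line integrals of $\curl\B_k$ along $\gamma_i$ and $\gamma'_i$. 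A parallel computation for the time derivative of the correction term---using Faraday's law on the circulations and Stokes' theorem on $\Sigma_i$ with $\partial\Sigma_i=\gamma'_i$ for the fluxes---yields an expression which cancels the boundary remainder, once the natural boundary condition $\n\times\curl\B_k=\0$ implicit in the weak formulation of resistive MHD is used. This establishes \eqref{diss} formally; for Leray--Hopf solutions the computation is justified by a Galerkin approximation in the spirit of~\cite[Appendix~A]{FL20}.

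For step (b), \eqref{diss} combined with the Leray--Hopf energy estimate $\eta_k\int_0^T\|\curl\B_k\|_{L^2}^2\le C$ and Cauchy--Schwarz gives
\[
\left|\eta_k\int_0^t\!\int_\Omega\B_k\cdot\curl\B_k\right|\le\sqrt{\eta_k}\,\Bigl(\eta_k\int_0^T\|\curl\B_k\|_{L^2}^2\Bigr)^{1/2}\|\B_k\|_{L^\infty(L^2)}\sqrt{T}\longrightarrow 0,
\]
so it suffices to show $\Upsilon(\B_k,t)\to\Upsilon(\B,t)$ (and likewise at $t=0$) along the ideal-limit sequence. The bulk piece $\int_\Omega\A_k\cdot\B_k$ is handled by the continuity-of-helicity argument in~\cite[Sect.~4--5]{FL20}, built on the compactness of the Biot--Savart map into $H^{1/2}(\Omega)$. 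The fluxes $\int_{\Sigma_i}\B_k\cdot\n_{\Sigma_i}=\int_\Omega\B_k\cdot\brho_i$ pass to the limit directly by \eqref{fluxSi} since $\brho_i\in\K_T$ is a fixed $L^2$ test field, and the circulations $\oint_{\gamma_i}\A_k\cdot\t_i$ are controlled by their boundary representation. The principal obstacle is the boundary cancellation in step (a): verifying it rigorously in the Lipschitz framework $X_T(\Omega)\hookrightarrow L^3(\Omega)$ requires a careful interpretation of the tangential traces $\n\times\A_k$ and $\n\times\curl\B_k$, but once this is in place the scheme is structurally parallel to the simply-connected case already treated in~\cite{FL20}.
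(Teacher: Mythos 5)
Your outline is essentially sound, but it takes a genuinely different and considerably heavier route than the paper. The paper never recomputes $d\Upsilon/dt$: it introduces the auxiliary functional $Z(\B_k,\A_k;t)=\int_\Omega\A_k\cdot\B_k-\int_{\partial\Omega}\A_k^\Sigma\times\n\cdot\A_k^H$, for which both the $k$-level dissipation identity and the vanishing of the increment for the ideal limit are already supplied by Theorem 1.5 of~\cite{FL20}, and then proves the purely kinematic identity $Z(\B_k,\A_k;t)-Z(\B_{k,0},\A_{k,0})=\Upsilon(\B_k,t)-\Upsilon(\B_{k,0})$ using \eqref{iden}, Stokes' theorem on the cutting surfaces, and the stationarity of the harmonic part $\B_k^H$ (equivalently, conservation of the cut fluxes). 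In particular, \eqref{hel_con} is obtained by applying the identity of~\cite{FL20} directly to the limit solution, not by letting $k\to\infty$ in \eqref{diss}. Your route instead re-derives the dissipation law from Faraday's law and then passes to the limit in $\Upsilon(\B_k,t)$ itself; this forces you to (re)establish the continuity of the quadratic term $\int_\Omega\A_k\cdot\B_k$ under weak ideal limits --- the analytic core of~\cite{FL20} --- and to justify the formal time-derivative computation for Leray--Hopf solutions in Lipschitz domains, both of which you delegate back to~\cite{FL20} without observing that Theorem 1.5 there already packages exactly these two facts for the functional $Z$. Two smaller points. First, the ``cancellation'' you describe in step (a) is really two separate vanishings: the boundary remainder $-\eta_k\int_{\partial\Omega}(\n\times\A_k)\cdot\curl\B_k$ dies because $\curl\B_k$ is normal to $\partial\Omega$ under the natural boundary condition $\curl\B_k\times\n=\0$, while the derivative of the correction term dies because the fluxes $\int_{\Sigma_i}\B_k\cdot\n_{\Sigma_i}$ are conserved and $\oint_{\gamma_i}\partial_t\A_k\cdot\t_i=0$ for the same reasons; nothing needs to cancel against anything. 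Second, the convergence of the correction terms in your step (b) is actually the easy part, since each circulation $\oint_{\gamma_i}\A_k\cdot\t_i=-\int_{\partial\Omega}\n\times\A_k\cdot\brho'_i$ is a bounded linear functional of $\A_k$ in $H(\tp{curl};\Omega)$ and each flux is constant in time; only the bilinear bulk term is delicate. Both approaches reach the same conclusion, but the paper's reduction is shorter and avoids re-opening the analysis of~\cite{FL20}.
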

\begin{proof}
Let us define
$$
Z(\B_k,\A_k;t) \defeq \int_\Omega \A_k(t) \cdot \B_k(t) - \int_{\partial \Omega} \A_k^\Sigma(t) \times \n \cdot  \A_k^H(t)
$$
and similarly for $\B_{k,0}$ and $\A_{k,0}$, $\B$ and $\A$, $\B_{0}$ and $\A_{0}$. In~\cite[Lemma 4.3]{FL20} it is shown that for Leray--Hopf solutions (and their weak limits) the harmonic part of the magnetic field is stationary, i.e.,
\[
\B_k^H(t) = \B^H_{k,0}, \quad \B^H(t) = \B^H_{0},
\]
and consequently $\A^H_k(t)=\A^H_{k,0}$, $\A^H(t)=\A^H_{0}$. Therefore, Theorem 1.5 in~\cite{FL20} (having corrected for a minus sign that is a misprint)  says that
 \begin{subequations}
 \begin{align}
& Z(\B_k,\A_k;t) - Z(\B_{k,0},\A_{k,0}) = - 2 \eta_k \int_0^t\int_{\Omega} \B_k\cdot\curl\,\B_k, \label{difference_in_Zk}\\
& Z(\B,\A;t) - Z(\B_{0},\A_0) = 0. \label{difference_in_Z}
\end{align}
\end{subequations}
Let us compute $\int_{\partial \Omega} \A_k^\Sigma(t) \times \n \cdot \A_k^H(t)$. By \eqref{iden} we have
\begin{align*}
\int_{\partial \Omega} \A_k^\Sigma(t) \times \n \cdot \A_k^H(t) = & \sum_{i=1}^g\left(\oint_{\gamma_i}\A^\Sigma_k(t)\cdot\t_i\right)\left(\oint_{\gamma'_i}\A_{k,0}^H\cdot\t'_i\right) \\
& -\sum_{i=1}^g\left(\oint_{\gamma'_i}\A^\Sigma_k(t)\cdot\t'_i\right)\left(\oint_{\gamma_i}\A_{k,0}^H\cdot\t_i\right). 
\end{align*}
Using the decomposition in (\ref{decomp_b}), we know that 
\be\label{diff}
\int_{\Sigma_i}\B^\Sigma_k(t) \cdot\n_{\Sigma_i} = \int_{\Sigma_i}\B_{k,0}^\Sigma\cdot\n_{\Sigma_i} = 0.
\en
Since $\curl\,\A_{k}^\Sigma(t)=\B_{k}^\Sigma(t)$, $\curl\,\A_{k,0}^H=\B_{k,0}^H$ and $\gamma'_i=\partial\Sigma_i$, applying Stokes' theorem gives
\[
\begin{array}{ll}
&\oint_{\gamma'_i}\A_{k}^\Sigma(t) \cdot\t'_i=\int_{\Sigma_i}\B_{k}^\Sigma(t) \cdot\n_{\Sigma_i} = 0,\\ [5pt]
&\oint_{\gamma'_i}\A_{k,0}^H\cdot\t'_i=\int_{\Sigma_i}\B_{k,0}^H\cdot\n_{\Sigma_i} =\int_{\Sigma_i}\B_{k,0}\cdot\n_{\Sigma_i}, 
\end{array}
\]
and also 
\be\label{conservation}
\begin{array}{ll}
&\int_{\Sigma_i}\B_{k}(t) \cdot\n_{\Sigma_i} = \int_{\Sigma_i}\B_{k}^H(t) \cdot\n_{\Sigma_i} \\ [5pt]&\qquad = \oint_{\gamma'_i}\A_{k}^H(t)\cdot\t'_i = \oint_{\gamma'_i}\A_{k,0}^H\cdot\t'_i = \int_{\Sigma_i}\B_{k,0}\cdot\n_{\Sigma_i} ,
\end{array}
\en
for $t\in[0,T)$, $i=1,\ldots,g$, $k \in \N$. In conclusion, repeating the same computation for $\int_{\partial \Omega} \A_{k,0}^\Sigma \times \n \cdot  \A_{k,0}^H$, we find 
$$
\begin{array}{ll}
&Z(\B_k,\A_k;t) - Z(\B_{k,0},\A_{k,0})\\ [5pt]
&\qquad = \int_\Omega \A_k(t) \cdot \B_k(t) -  \int_\Omega \A_{k,0} \cdot \B_{k,0}\\[5pt]
&\qquad\qquad - \sum_{i=1}^g\left(\oint_{\gamma_i}[\A^\Sigma_k(t)-\A^\Sigma_{k,0}]\cdot\t_i\right)\left(\int_{\Sigma_i}\B_{k,0}\cdot\n_{\Sigma_i}\right) \\ [5pt]
&\qquad= \int_\Omega \A_k(t) \cdot \B_k(t) -  \int_\Omega \A_{k,0} \cdot \B_{k,0}\\[5pt]
&\qquad\qquad - \sum_{i=1}^g\left(\oint_{\gamma_i}[\A_k(t)-\A_{k,0}]\cdot\t_i\right)\left(\int_{\Sigma_i}\B_{k,0}\cdot\n_{\Sigma_i}\right) \\ [5pt]
&\qquad=  \Upsilon(\B_k,t)-\Upsilon(\B_{k,0}) ,
\end{array}
$$
having used  again that $\A^H_k(t)=\A^H_{k,0}$ and moreover \eqref{conservation}. By combining the computation above with \eqref{difference_in_Zk} we get \eqref{diss}.

By analogous reasoning, $Z(\B,\A;t) - Z(\B_0,\A_0) = \Upsilon(\B,t)-\Upsilon(\B_0)$. Now \eqref{difference_in_Z} leads to \eqref{hel_con}, and Theorem 1 is proved.
\end{proof} 

\begin{remark}
The right-hand side of \eqref{diss} is bounded by $3\sqrt{t \eta_k} (\|\u_{k,0}\|_{L^2}^2 + \|\B_{k,0}\|_{L^2}^2)$ for all $k \in \N$ and $t \in [0,T)$. This is seen by using Young's inequality on $\B_k/\sqrt{t}$ and $\sqrt{t} \, \tp{curl} \, \B_k$ and then appealing to the energy inequality; see~\cite[Lemma 4.6]{FL20} where Young's inequality is used on $\B_k$ and $\tp{curl}\, \B_k$ instead.
\end{remark}

\section{Conclusion}
This extension to the proof in Faraco and Lindberg~\cite{FL20} of Taylor's conjecture allows for the extra consideration of defining magnetic helicity in multiply connected domains with arbitrary vector potentials, rather than a specific subset which may impose unwanted conditions on the magnetic field, e.g., forcing zero flux through cutting surfaces. The result gives a complete answer to a problem that has been posed almost 50 years ago.

\bigskip

\noindent {\bf Acknowledgments.}
DF acknowledges financial support from the Spanish Ministry of Science and Innovation through the Severo Ochoa Programme for Centres of Excellence in R\&D(CEX2019-000904-S and by the MTM2017-85934-C3-2-P2. He is also partially supported by CAM through the Line of excellence for University Teaching Staff between CM and UAM. DF and SL are partially supported by the ERC Advanced Grant 834728. DM acknowledges support from AFOSR: grant number FA8655-20-1-7032. AV acknowledges the support of INdAM-GNCS.


\section{References}

\small{\bibliography{mybibfile}}

\end{document}